\documentclass{article}

\usepackage{arxiv}

\RequirePackage[colorlinks,citecolor=blue,linkcolor=blue,urlcolor=blue,pagebackref]{hyperref}
\usepackage[utf8]{inputenc} 
\usepackage[T1]{fontenc}    
\usepackage{hyperref}       
\usepackage{url}            
\usepackage{booktabs}       
\usepackage{amsfonts}       
\usepackage{nicefrac}       
\usepackage{microtype}      
\usepackage{xcolor}         
\usepackage{amsmath,amssymb}
\usepackage{natbib} 
\usepackage{comment}

\usepackage{amsthm}

\usepackage[ruled,vlined]{algorithm2e}
\usepackage{bm}

\usepackage{bbm}






\newcommand{\argmax}[1]{\underset{#1}{\text{argmax}}}



\newcommand{\eps}{\epsilon}
\newcommand{\dKL}{d_{\mathrm{KL}}}
\newcommand{\btheta}{\bm{\theta}}

\newcommand{\Beta}{\mathrm{Beta}}

\usepackage{amsthm}
\usepackage[capitalise]{cleveref}
\theoremstyle{plain}

\newtheorem{theorem}{Theorem}[section]
\newtheorem{lemma}[theorem]{Lemma}

\newtheorem*{theoremkasy}{Theorem~1 \citep{Kasy2021}}

\newtheorem{corollary}[theorem]{Corollary}
\newtheorem{proposition}[theorem]{Proposition}

\newtheorem{definition}[theorem]{Definition}

\newtheorem*{remark}{Remark}

\newcommand*{\email}[1]{\texttt{#1}}

\usepackage{authblk}

\newcommand\blfootnote[1]{%
  \begingroup
  \renewcommand\thefootnote{}\footnote{#1}%
  \addtocounter{footnote}{-1}%
  \endgroup
}

\title{Policy Choice and Best Arm Identification:\\
Asymptotic Analysis of Exploration Sampling}

\author[1,2]{Kaito Ariu}
\author[1]{Masahiro Kato\thanks{Corresponding author: \email{masahiro\_kato@cyberagent.co.jp}}$\ \ $}
\author[3]{Junpei Komiyama}
\author[4]{Kenichiro McAlinn}
\author[5]{Chao Qin}

\affil[1]{AI Lab, CyberAgent, Inc.}
\affil[2]{School of Electrical Engineering and Computer Science, KTH}
\affil[3]{Stern School of Business, New York University} 
\affil[4]{Fox School of Business, Temple University}
\affil[5]{Columbia Business School, Columbia University}

\begin{document}

\maketitle

\begin{abstract}
\blfootnote{We thank Edoardo Airoldi, Hidehiko Ichimura, Daniel Russo, and Xuedong Shang for insightful comments and discussion. This draft was produced following communication with the authors of the original manuscript.}
We consider the ``policy choice'' problem-- otherwise known as best arm identification in the bandit literature-- proposed by \citet{Kasy2021} for adaptive experimental design. 
Theorem~1 of \citet{Kasy2021} provides three asymptotic results that give theoretical guarantees for exploration sampling developed for this setting. We first show that the proof of Theorem~1~(1) has technical issues, and the proof and statement of Theorem~1~(2) are incorrect.
We then show, through a counterexample, that Theorem~1~(3) is false.
For the former two, we correct the statements and provide rigorous proofs. For Theorem~1~(3), we propose an alternative objective function, which we call posterior weighted policy regret, and derive the asymptotic optimality of exploration sampling. 
\end{abstract}



\section{Introduction}
\citet{Kasy2021} proposes what the authors call the ``policy choice'' problem for adaptive treatment assignment in experiments.
The goal in policy choice is to choose a policy that is the {\it best} treatment amongst a set of treatments within several waves of an experiment.  
To evaluate algorithms in this setting, the authors propose a metric called ``policy regret.''
Using this metric, they develop a dynamic programming algorithm to optimize the expected policy regret (expected social welfare), but find that the proposed algorithm is computationally intractable. 
In light of this, the authors propose an algorithm called ``exploration sampling'' and prove its asymptotic optimality (Theorem~1).

The purpose of this paper is to show, first, that Theorem~1 is incorrect.
In particular, we show that the proof of Theorem~1~(1) has technical issues, proof and statement of Theorem~1~(2) is incorrect, and Theorem~1~(3) is false, which we show through a counterexample.
We then provide a corrected version of Theorem~1, as well as the associated corrected lemmata, with rigorous proofs.
As Theorem~1~(3) is false under the expected policy regret, we propose the posterior weighted policy regret, which we then show the asymptotic optimality of exploration sampling under this objective.
We further extend the theoretical results by relaxing the assumptions on the prior specification.
These results provide theoretical support and extend the applicability of exploration sampling in adaptive experiments.

This paper is organized as follows.
We review the problem setting and main theoretical results in \citet{Kasy2021} in Section~\ref{sec:problem}.
The problems with the main theorem are discussed in Section~\ref{sec:incorrect}.
In Section~\ref{sec:our_correct}, we provide our corrected theorem (Theorem~\ref{thm:correct_kasy}).
The proof of Theorem~\ref{thm:correct_kasy} and the corrected lemmata used in our proof is in Appendix~\ref{sec:correction} and Appendix~\ref{sec:corrected_lemma}, respectively.

\section{Problem Setting and Main Results in Kasy and Sautmann (2021)}
\label{sec:problem}
We first review the problem setting and theoretical result of \citet{Kasy2021}.

\subsection*{Problem Setting}
Suppose that there are multiple treatments, which are also called ``arms'' in the multi-armed bandit literature. Let $k\geq 2$ be the number of possible treatments. In the policy choice problem, a policymaker is interested in the expected outcome of the treatments. The outcome corresponding to each treatment is a binary random variable. For estimating the expected values, the policymaker conducts adaptive experiments with multiple waves. At the end of each wave, the policymaker observes the outcomes and updates treatment assignment in subsequent waves based on past observations. Upon completion of the waves, the policymaker chooses the treatment that yields the highest expected outcome, which is called policy. The goal of policy choice \citep{Kasy2021}, also known as best arm identification\footnote{Here, "identification" simply means selecting the arm (treatment) with the highest expected reward, and not the identification problem in the econometric literature.} (BAI), is to propose an optimal experimental design for this setting.

The experiment consists of waves $t=1,\dots,T$, where at each wave $t$, there is a new random sample of $N_t$ experimental units, $i = 1, \dots, N_t$, drawn from the population of interest. We denote the total sample size by $M=\sum^T_{t=1}N_t$. For each unit $i$ in period $t$, the experimenter can assign one of $k$ different treatments $D_{i,t}\in\{1,\dots, k\}$ and then observe a binary outcome $Y_{i,t}\in\{0, 1\} = \sum^k_{d=1}\mathbbm{1}\{D_{i,t} = d\}Y^d_{i,t}$, where the potential outcome vectors $(Y^1_{i,t},\dots, Y^k_{i,t})$ for unit $i$ in period $t$ are i.i.d. draws. Let us assume that each treatment $d\in\{1,\ldots,k\}$ has the average potential outcome and denote it as $\theta^d = \mathbb{E}\left[Y^d_{i,t}\right]$.
Under this setting, the goal is to choose the policy with the highest expected outcome amongst the given multiple treatments. \citet{Kasy2021} denotes the true optimal treatment by $d^{(1)} \in \argmax{d\in\{1,\ldots,k\}}\,\theta^{d}$, and let $\Delta^d = \theta^{d^{(1)}} - \theta^d$ be the policy regret when choosing treatment $d\in\{1,\ldots,k\}$, relative to the optimal treatment $d^{(1)}$. This performance metric is refereed to as the simple regret in the BAI literature \citep{Audibert2010,lattimore2020}.

At the beginning of wave $t$, there are $N_t$ units available, and the experimenter can optimize the allocation of the treatments to these $N_t$ units.
In \citet{Kasy2021}, 
the treatment assignment
in wave $t$ is summarized by the vector $\bm n_t = (n^1_t, \dots, n^k_t)$ with $\sum_{d=1}^k n^d_t = N_t$.  For each treatment $d$, denote the number of successes among $n^d_t$ units in wave $t$ by $s^d_t = \sum_{i=1}^{N_t}\mathbbm{1}\{D_{i,t} = d, Y_{i,t} =1\}$. Let us denote the outcome of wave $t$ by the vector $\bm s_t = (s^1_t, \dots, s^k_t)$, where $s^d_t \leq n^d_t$, which can be observed at the end of wave $t$. We denote the cumulative versions of these terms from $1$ to $t$ by $m^d_t = \sum_{t'\leq t} n^d_{t'}$, $r^d_t = \sum_{t' \leq t} s^d_{t'}$, and $\bm m_t = (m^1_t,\dots, m^k_{t})$, $\bm r_t = (r^1_{t},\dots, r^k_{t})$. 

The policymaker holds prior belief $\mathrm{Beta}(\alpha^d_0, \beta^d_0)$ for treatment $d\in\{1,\ldots,k\}$.
In \citet{Kasy2021}, the uniform prior is used as the default for applications, i.e., $\alpha^d_{0} = \beta^d_0 = 1$ for all $d$. The posterior belief is defined by the parameters $(\alpha^d_t, \beta^d_t) = (\alpha^d_0 + r^d_{t-1}, \beta^d_0 + m^d_{t-1} - r^d_{t-1})$. 

\citet{Kasy2021} gives per-capita expected social welfare of policy $d$ as 
\[\mathrm{SW}_T(d) = \mathbb{E}\left[\theta^d| \bm m_T, \bm r_T\right] = \frac{\alpha^d_0 + r^d_T}{\alpha^d_0 + \beta^d_0 + m^d_T},\]
and proposes choosing a policy as $d^*_T \in \argmax{d\in\{1,\ldots,k\}}\,\mathrm{SW}_T(d)$. 

\subsection*{Expected Policy Regret} 
In \citet{Kasy2021}, the treatment assignment algorithms are evaluated by the expected social welfare, or, equivalently, expected policy regret, which is defined for the policy $d^*_T$ as follows: 
\begin{align}\label{ineq_bairegret}
\mathrm{R}_{\bm{\theta}}(T) = \mathbb{E}\left[\Delta^{d^*_T}| \bm \theta\right] = \sum_{d=1}^k \Delta^d \cdot \mathbb{P}(d^*_T = d| \bm \theta),
\end{align}
where $T$ is the number of experimental waves, and the expectation is taken over all possible successes and assignment choices for treatments.
This objective is identical to the expected simple regret in the BAI literature.

\subsection*{Exploration Sampling}  
In each wave $t$, we define the posterior probability that the treatment $d\in\{1,\ldots,k\}$ is the optimal treatment as 
\[p^d_t = \mathbb{P}\left(d = \argmax{d'\in\{1,\ldots,k\}}\, \tilde{\theta}^{d'}\mid \bm m_{t- 1}, \bm r_{t-1}\right),\]
where $\tilde{\bm{\theta}} = (\tilde{\theta}^1,\ldots,\tilde{\theta}^k)$ is a sample drawn from the posterior.
Thompson sampling can be interpreted as a method that assigns $\lfloor p^d_t N_t\rfloor$ observations to treatment $d$. Based on this idea and the result of \citet{Russo2016}, \citet{Kasy2021} proposes their treatment assignment algorithm called exploration sampling. In exploration sampling, we assign $\lfloor q^d_t N_t\rfloor$ of observations to treatment $d$, where 
\begin{align*}
    q^d_t = S_t\cdot p^d_t \cdot (1-p^d_t),
\end{align*}
with the normalization term $S_t = \left(\sum_{d=1}^k p^d_{t}\cdot (1- p^d_t)\right)^{-1}$.
\citet{Kasy2021} analyzes the theoretical properties of this algorithm and provides asymptotic guarantees.

\begin{remark}[BAI]
This problem setting is known as BAI in the multi-armed bandit literature \citep{even2002pac, Mannor2004, EvanDar2006, Audibert2010}. 
Though the problem of BAI itself goes back decades, variants go as far back as the 1950s, in the context of sequential testing problems \citep{Wald1945,Chernoff1959}. Some of the earliest advances on this topic are summarized in \citet{bechhofer1968sequential}. Another literature on ordinal optimization has been studied in the operation research community and a modern formulation was established in the 2000s \citep{chen2000,glynn2004large}. Most of those studies have considered the estimation of optimal allocations separately from the error rate under known optimal allocations. In the 2010s, the machine learning community reformulated the problem to synthesize both issues and explicitly discussed them.
For a more detailed survey, see, for example, the Introduction section of \citet{Kaufman2016complexity} and Section~33 of \citet{lattimore2020}. 
\end{remark}

\subsection*{Main Theorem in \cite{Kasy2021}}
\cite{Kasy2021} provides the following performance guarantees for exploration sampling.

\begin{theoremkasy}
Consider exploration sampling, with fixed wave
size $N_t = N \geq 1$. 
Assume that the optimal arm $d^{(1)} = \underset{d\in\{1,\ldots,k\}}{\rm{argmax}}\,\theta^d$ is unique and that $\theta^{d^{(1)}} < 1$. As $T\to \infty$, the followings hold:
\begin{description}
    \item[(1)] The share of observations $\frac{m^{d^{(1)}}_T}{NT}$ assigned to the best treatment $d^{(1)}$ converges in probability to $\frac{1}{2}$, that is, 
    \[
    \frac{m^{d^{(1)}}_T}{NT} \xrightarrow{p}  \frac{1}{2}.
    \]
    \item[(2)] The share of observations $\frac{m^d_T}{NT}$ assigned to each treatment $d\neq d^{(1)}$ converges in probability to a non-random share $\rho^d$, that is, 
    \[
    \frac{m^d_T}{NT} \xrightarrow{p} \rho^d, \quad \forall d\neq d^{(1)},
    \]
    where the limit assignment shares $\bm{\rho} = (\rho^1, \ldots, \rho^k)$ (with $\rho^{d^{(1)}} =\frac{1}{2}$) is such that $-\frac{1}{NT}\log p^d_T \xrightarrow{p} \Gamma^*$ for some $\Gamma^* > 0$ that is constant across $d\neq d^{(1)}$.
    \item[(3)] Expected policy regret converges to $0$ at the same rate $\Gamma^*$, that is, 
    \[
    -\frac{1}{NT}\log \mathrm{R}_{\bm{\theta}}(T) \xrightarrow{} \Gamma^*.
    \]
    No algorithm with limit assignment shares $\hat{\bm{\rho}}\neq \bm \rho$ with $\hat{\rho}^{d^{(1)}} = \frac{1}{2}$ exists for which $\mathrm{R}_{\btheta}(T)$ goes to $0$ at a faster rate than $\Gamma^*$.
\end{description}

\end{theoremkasy}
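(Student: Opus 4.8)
The plan is to prove the three statements in sequence, since each builds on the previous one, and to treat exploration sampling as a self-reinforcing allocation whose limit is characterized by a large-deviations balance condition. First, for statement (1), I would establish \emph{consistency}: that every arm is sampled infinitely often, that the empirical means converge to the true $\theta^d$, and hence that the posterior probability of the true best arm satisfies $p^{d^{(1)}}_t \to 1$. The weights $q^d_t = S_t\, p^d_t(1-p^d_t)$ do not vanish while any $p^d_t$ stays interior, which should force enough exploration to guarantee consistency. Granting $p^{d^{(1)}}_t \to 1$, the best arm's share follows from a direct expansion: since $\sum_d p^d_t = 1$ we have $p^{d^{(1)}}_t(1-p^{d^{(1)}}_t) = p^{d^{(1)}}_t \sum_{d\neq d^{(1)}} p^d_t \approx \sum_{d\neq d^{(1)}} p^d_t$, while $\sum_{d\neq d^{(1)}} p^d_t(1-p^d_t) \approx \sum_{d\neq d^{(1)}} p^d_t$ as each $p^d_t \to 0$. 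The best arm's numerator thus matches the total suboptimal mass, and the normalized share converges to $1/2$.

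For statement (2), I would pass to the large-deviations regime. With the shares $\overline{q}^d_T$ approaching limits $\rho^d$, the posterior probability $p^d_t$ that a suboptimal arm $d$ is best should decay like $\exp(-NT\,\Gamma_d(\bm\rho))$, where $\Gamma_d(\bm\rho)$ is a Chernoff-type rate depending on $\rho^{d^{(1)}}=1/2$, on $\rho^d$, and on the $d_{\mathrm{KL}}$-gap between $\theta^{d^{(1)}}$ and $\theta^d$ governing the probability that the two posteriors cross. Because exploration sampling allocates the remaining half of the budget in proportion to the current $p^d_t$, the dynamics should be driven to the unique fixed point at which all suboptimal decay rates coincide, $\Gamma_d(\bm\rho) = \Gamma^*$ for every $d\neq d^{(1)}$. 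Solving this balance condition together with $\sum_d \rho^d = 1$ and $\rho^{d^{(1)}} = 1/2$ pins down $\bm\rho$ and $\Gamma^*$ and yields $-\frac{1}{NT}\log p^d_t \xrightarrow{p} \Gamma^*$.

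Statement (3) is where I expect the real difficulty, and I see two distinct obstacles. For the \emph{rate} claim I would try to convert the exponential decay of the posterior into exponential decay of $R_{\bm\theta}(T) = \sum_d \Delta^d\,\mathbb{P}(d^*_T = d)$, using that $\{d^*_T = d\}$ is tied to $\{p^d_T \text{ large}\}$. The hard part is that (1) and (2) deliver only convergence \emph{in probability} of the allocation and of $-\frac{1}{NT}\log p^d_t$, whereas $\mathbb{P}(d^*_T = d)$ is an average over all sample paths, including atypical ones on which the allocation strays far from $\bm\rho$. Writing $\mathbb{P}(d^*_T = d) = \mathbb{E}\big[\mathbb{P}(d^*_T = d \mid \text{allocation path})\big]$, the conditional error on a deviating path can decay strictly slower than $\Gamma^*$, so such paths may dominate the expectation unless their probability is shown to be exponentially small at a compensating rate; convergence in probability only forces this probability to zero with no rate, so it cannot rule out rare excursions inflating $R_{\bm\theta}(T)$ above $\exp(-NT\Gamma^*)$. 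Closing this gap therefore requires a quantitative, uniform large-deviations bound on allocation excursions — exactly the step that a convergence-in-probability argument cannot supply. For the \emph{optimality} half I would further need a matching fixed-budget lower bound showing no limit allocation with $\rho^{d^{(1)}}=1/2$ beats $\Gamma^*$; this is the genuine obstacle, since, unlike the fixed-confidence setting where a clean change-of-measure transportation argument delivers matching bounds, the tight fixed-budget lower bound is precisely the long-standing open problem flagged in the introduction, and I would not expect it to follow from the upper-bound analysis.
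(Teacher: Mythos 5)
Your instincts here are exactly right, and you should know that the paper you are being compared against does not prove this theorem --- it refutes it. The two obstacles you isolate for statement (3) are precisely the paper's two arguments. First, the paper shows that convergence in probability of $-\frac{1}{NT}\log p^d_t$ cannot be upgraded to a statement about the expectation $R_{\bm{\theta}}(T)=\sum_d \Delta^d\,\mathbb{P}(d^*_T=d\mid\btheta)$: it exhibits a concrete scenario in which the $\delta$ in the convergence-in-probability statement decays only like $1/\sqrt{t}$, so the atypical paths you worry about contribute a polynomially decaying term that dominates any exponential one. This is exactly the gap you describe as requiring ``a quantitative, uniform large-deviations bound on allocation excursions.'' Second, and more decisively, the paper shows that the optimality half of (3) is not merely open but false: combining Theorem~2 of Carpentier and Locatelli (2016), applied to a family of $k$ perturbed problem instances, with the elementary bound $\Gamma^*(\btheta)\ge \frac{1}{4H(\btheta)}$ (obtained via Pinsker's inequality from the Glynn--Juneja characterization of $\Gamma^*$ as a min-max over allocations), one gets that for \emph{any} algorithm there is an instance on which $-\frac{1}{T}\log R_{\btheta}(T)\le \frac{800}{\log k}\,\Gamma^* + o(1)$, which for large $k$ is strictly slower than the claimed rate $\Gamma^*$. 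So your decision not to manufacture a matching fixed-budget lower bound was correct; no such bound exists, and the first sentence of (3) also fails on such instances.

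On (1) and (2): your direct fixed-point sketch is plausible but is not a proof, and the technical content you elide (that every arm is sampled infinitely often under exploration sampling, posterior consistency for the Beta--Bernoulli model, and convergence of the allocation to the balance point where all decay rates $\Gamma_d(\bm{\rho})$ coincide) is exactly where the original proof also fails: it invokes Russo (2016), whose Assumption~1 (boundedness of the first derivative of the log-partition function) is violated by the Bernoulli model with parameter space $[0,1]$ and a $\mathrm{Beta}(1,1)$ prior, so the cited results do not apply. Shang et al.\ (2020) had to supply a separate, nontrivial posterior-consistency argument for the closely related top-two Thompson sampling algorithm. The paper's verdict is that (1) and (2) may be true but are unproven as stated, while (3) is false; your proposal independently arrives at essentially the same diagnosis, which is the substantive contribution of the paper itself.
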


While the introduction of the policy choice setting into the field of economics is laudable, as well as conducting a field experiment that applies this methodology to actual policy experiments, unfortunately, this theorem has several issues, which we will expound and correct below.

\section{Incorrectness of Theorem~1 in Kasy and Saumtmann (2021)}
First, we describe the incorrectness of Theorem~1~(1) and (2). Then, we provide a counterexample to Theorem~1~(3).
\label{sec:incorrect}
\subsection*{Incorrect Proof for Theorem~1~(1) and (2)}
In their proof, \citet{Kasy2021} refers to \citet{Russo2016} when showing posterior convergence. However, the results of  \citet{Russo2016} do not guarantee the performance of their algorithm with the Beta--Bernoulli model. This is because Assumption~1 of \citet{Russo2016} requires the boundedness on the first derivative of the log-partition function of the reward distribution belonging to the exponential family. This assumption is violated when the parameter space of the Bernoulli models is $[0, 1]$. Therefore, when using a beta prior whose support covers $[0,1]$, we cannot apply the results of \citet{Russo2016}. For a more detailed discussion, see Section~5 of \citet{Shang2020}\footnote{In Section~5 of \citet{Shang2020}, the authors explain that ``\citet{Russo2016} proves a similar theorem under three confining boundedness assumptions (see \citet{Russo2016}, Assumption 1) on the parameter space, the prior density and the (first derivative of the) log-normalizer of the exponential family. Hence, the theorems in \citet{Russo2016} do not apply to the two bandit models most used in practice and considered in this paper: the Gaussian and Bernoulli model. In the first case, the parameter space is unbounded; in the latter, the derivative of the log-normalizer
(which is $e^\eta/(1+e^\eta)$) is unbounded.''}.
This problem has already been pointed out by \citet{Russo2016} and \citet{Shang2020}. To show the posterior convergence rate for the Beta--Bernoulli model with the general $\mathrm{Beta}(\alpha_0^d, \beta_0^d)$ prior, a separate proof is needed for each algorithm. For instance, \citet{Shang2020} proves posterior consistency when considering the top-two Thompson sampling (TTTS). This proof is non-trivial, which \citet{Kasy2021} does not show.

Additionally, the limit of the convergence is different from that of \citet{Russo2016}. In \citet{Kasy2021}, $\Gamma^*$ and $\bm{\rho} = (\rho^1, \ldots, \rho^k)$ are given as the optimal value and solutions of the following optimization problem:
\begin{align}
\label{ineq_glynnopt}
&\max_{\bm{w}\in\mathbb{R}^k}\ \Gamma\\
&\mathrm{s.t.}\ G_d\left(w^{d^{(1)}}, w^d\right) - \Gamma \ge 0, \quad \forall d \neq d^{(1)},\nonumber\\
&\ \ \ \ \ \sum_{d=1}^k w^d = 1,\quad w^{d^{(1)}} = \frac{1}{2},\quad w^d \ge 0, \quad\forall d \neq d^{(1)},\nonumber
\end{align}
where
\begin{equation*}
G_d \left(w^{d^{(1)}}, w^d\right) = \min_{x \in \left[\theta^d, \theta^{d^{(1)}}\right]}  \left[w^{d^{(1)}} \dKL\left(x, \theta^{d^{(1)}}\right) + w^d \dKL\left(x, \theta^d\right)\right],
\end{equation*}
and $\dKL(p, q) := p \log(p/q) + (1-p)\log((1-p)/(1-q))$ is the KL divergence between two Bernoulli distributions.
However, these terms are different from that introduced in \citet{Russo2016}, which is defined as the value $\Lambda^*$ and the solution $\bm{\lambda} = (\lambda^1, \ldots, \lambda^k)$ of the following optimization problem:
    \begin{align}\label{ineq_shangopt}
&\max_{\bm{w}\in\mathbb{R}^k}\ \Lambda\\
&\mathrm{s.t.}\ C_d\left(w^{d^{(1)}}, w^d\right) - \Lambda \ge 0, \quad \forall d \neq d^{(1)},\nonumber\\
&\ \ \ \ \ \sum_{d=1}^k w^d = 1,\quad w^{d^{(1)}} = \frac{1}{2},\quad w^d \ge 0, \quad\forall d \neq d^{(1)},\nonumber
\end{align}
where
\begin{equation*}
C_d \left(w^{d^{(1)}}, w^d\right) = \min_{x \in \left[\theta^d, \theta^{d^{(1)}}\right]}  \left[w^{d^{(1)}} \dKL\left(\theta^{d^{(1)}}, x\right) + w^d \dKL\left(\theta^d, x\right)\right]. 
\end{equation*}
The major difference between these optimization problems is that the arguments of the Kullback–Leibler divergence are reversed.\footnote{Sandeep Juneja and Daniel Russo have also discovered this asymmetry. Chao Qin thanks them for the discussion.}
Intuitively speaking, the value $\Gamma^*$ of \citet{glynn2004large} characterizes the likelihood of incorrectly recommending treatment when the optimal treatment is $d^{(1)}$, whereas the value $\Lambda^*$ of \citet{Russo2016} characterizes the likelihood of the observed data under the hypothesis that the optimal treatment is \textit{not} treatment $d^{(1)}$. 
The latter notion is appropriate here as we are minimizing the mass of the posterior where the optimal treatment is different.

\begin{remark}[Almost sure convergence and convergence in probability]\label{rem_as}
\rm
The original statement of \citet{Russo2016} states $-\frac{1}{NT}\log \left(\sum_{d\neq d^{(1)}}p^d_T\right) \xrightarrow{\mathrm{a.s.}} \Lambda^*$, whereas \citet{Kasy2021} incorrectly cites the result as $-\frac{1}{NT}\log p^d_T \xrightarrow{p} \Gamma^*$.
Almost sure convergence describes an event that holds on each sample path, whereas convergence in probability describes an event that holds for a single $t$, which is weaker than almost sure convergence.

During the proofs, one can find many steps that essentially require pathwise discussions. 
For example, to apply the law of large numbers, it is required that we have infinitely many samples for almost all sample paths.
The derivation of the limit rate $\Lambda^*$ in our corrected theorem is via a bound of the posterior probabilities of the form $\exp(-(1\pm\epsilon)NT\Lambda^*)$ that holds over all $t > T_0(\eps)$ in Appendix~\ref{appdx:lem:kasy22} (proof of Lemma~\ref{lem:kasy22}).
\end{remark}

\begin{remark}[General prior]
\rm
Results in \citet{Shang2020} are limited to the uniform prior (i.e., $\Beta(1,1)$). In this paper, we extend the results in \citet{Shang2020} to $\Beta(\alpha_0^d,\beta_0^d)$ priors for each $d \in \{1, \ldots, k\}$, where the constants $\alpha_0^1, \ldots, \alpha_0^k, \beta_0^1, \ldots ,\beta_0^k>0$ can be arbitrary. 
\end{remark}

\subsection*{Comments on Theorem~1~(3)}
There are two ways in which Theorem~1~(3) is problematic. 
First, we show that the proof is incorrect. 
Second, we show that the statement contradicts an existing theoretical result \citep{Carpentier2016}.

\subsubsection*{Incorrect Proof of Theorem~1~(3)}\label{subsec_incorrectproof}
Theorem~1~(3) cannot be a consequence of Theorem~1~(1) and (2). Theorem~1~(3) quantifies the asymptotic convergence rate of the expected policy regret, $ \mathrm{R}_{\bm{\theta}}(T)$, which is equivalent to the weighted sum of the asymptotic probability of misidentification, $\sum_{d=1}^k\Delta^d\mathbb{P}\left(d^*_T = d| \bm{\theta}\right)$. In order to evaluate this, one needs to quantify the convergence rate of $-\frac{1}{NT}\log p^d_t$ to the optimal treatment allocation $\Gamma^*$. However, Theorem~1~(1) and (2) only state the consistency of the optimal treatment allocation, without providing the convergence rate. 

A convergence in probability states that, for any $\eps>0$ and $\delta > 0$, there exists $t_0(\epsilon, \delta)$, such that for all $t> t_0(\epsilon, \delta)$,
\begin{equation}
\label{eq:conv_prob}
\mathbb{P}\left(\left|-\frac{1}{NT}\log p^d_T - \Gamma^*\right| > \eps\right) \le \delta.
\end{equation}
From the convergence in probability, we can show that the expected policy regret converges to $0$ as $T\to \infty$ ($\mathrm{R}_{\btheta} (T) = o(1)$), but we cannot derive Theorem~1~(3), i.e., $-\frac{1}{NT}\log \mathrm{R}_{\btheta}(T) \to \Gamma^*$ as $T \to \infty$, because there exist counterexamples where the speed of convergence can be insufficient. 

For example, convergence in probability means that it can include the following examples of convergence: a relationship $t_0 = \max(1/\epsilon^2, 1/\delta^2)$. In this case, for any $\varepsilon$ and for all $t \ge t_0(\epsilon, \delta)$, the inequality \eqref{eq:conv_prob} holds for $ \delta \ge 1/\sqrt{t}$. In this case, the probability of convergence cannot be guaranteed to be greater than $1-1/\sqrt{t}$. It leads to a polynomial order  $1/\sqrt{t}$ of the expected policy regret. Thus, convergence in probability does not lead to the expected policy regret of the same rate.

\subsubsection*{Counterexample to Theorem~1~(3)}\label{subsec_carpentier}
The result derived in \cite{Carpentier2016} contradicts Theorem~1~(3). 
By utilizing information theoretic arguments, \cite{Carpentier2016} constructs a counterexample where the expected policy regret can be strictly larger than what is expected from the results\footnote{The difference between $\Gamma^*$ and $\Lambda^*$ does not matter in this counterexample; Theorem \ref{thm_lower} utilizes the Pinsker's inequality, which bounds both $\Gamma^*$ and $\Lambda^*$ from below.} of \citet{glynn2004large}. 

The crux here is that the optimal allocation computed from $\Gamma^* = \Gamma^*(\btheta)$ depends on the true parameter $\btheta$. Let us hypothetically consider several policy choice problems under different parameters. We call each policy choice problem a problem instance, which is characterized by the parameter $\btheta$. For an easy problem instance (i.e., $\btheta$ such that $\Gamma^*(\btheta)$ is large), the expected policy regret must decay faster, whereas for a hard problem instance (i.e., $\btheta$ such that $\Gamma^*(\btheta)$ is small) the expected policy regret decays slower. Suppose that there exist $k$ problem instances under $k$ different parameters $\btheta_1, \dots, \btheta_k$, which are indexed by $1,\dots,k$, respectively. Here, $k$ is the same as the number of treatments in \citet{Kasy2021}. \cite{Carpentier2016} constructs a particular set of $k$ problem instances such that the expected policy regret of an algorithm converges at the rate slower than $\Gamma^*(\btheta)$ for at least one of the $k$ problem instances.

In the following, we assume unit wave size $N = 1$ (and thus $M=NT=T$), which is usually adopted in the BAI literature:
\begin{theorem}{\rm (Lower bound on the expected policy regret.)}\label{thm_lower}
There exists a problem instance $\btheta$ and an infinite subsequence of integers $\{T_n\}_{n=1}^\infty$ such that for any $T_n$, the expected policy regret of any algorithm is lower bounded as
\begin{equation*} \label{ineq_cgammarate}
\mathrm{R}_{\btheta}(T_n) \ge \exp\left(-\frac{C}{\log(k)}\Gamma^*(\btheta)T_n\right) 
\end{equation*}
for some constant $C>0$.
\end{theorem}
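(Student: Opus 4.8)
The plan is to establish Theorem~\ref{thm_lower} by reducing it to the fixed-budget lower bound of \citet{Carpentier2016} and then translating that result into the policy-regret language of the present setting. The first step is to note that lower bounding $R_{\btheta}(T) = \sum_d \Delta^d \, \bP(d^*_T = d \mid \btheta)$ reduces to lower bounding a probability of misidentification: since the best arm contributes zero, $R_{\btheta}(T) \ge \Delta_{\min}\,\bP(d^*_T \neq d^{(1)} \mid \btheta)$ with $\Delta_{\min} = \min_{d \neq d^{(1)}}\Delta^d > 0$ a fixed constant of the instance. Hence it suffices to exhibit an instance on which \emph{every} algorithm errs with probability at least $\exp(-\frac{C'}{\log k}\Gamma^* T)$; the multiplicative factor $\Delta_{\min}$ is then absorbed into the constant once $T$ is large.

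Second, I would construct the adversarial family of instances. Following \citet{Carpentier2016}, I take a collection of Bernoulli problem instances $\btheta_1,\dots,\btheta_k$ that share a common oracle exponent $\Gamma^*(\btheta)$ but possess distinct best arms. The instances are built with geometrically spaced suboptimality gaps, so that the complexity measure governing $\Gamma^*$ is dominated by a single scale while the family still covers $k$ genuinely different alternatives; this geometric structure is exactly what ultimately injects the factor $\log k$. The instances are moreover chosen to be pairwise close in Kullback--Leibler divergence, so that no algorithm can reliably tell neighbouring instances apart within budget $T$.

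Third comes the information-theoretic core: a change-of-measure argument. For a fixed algorithm, I would relate the error events across the instances through a transportation (Bretagnolle--Huber type) inequality, bounding the probability of selecting the wrong arm on instance $\btheta_j$ in terms of the expected number of pulls the algorithm devotes to the arm that distinguishes $\btheta_j$ from its neighbour, weighted by the per-sample KL divergence. These expected pull counts are constrained by the total budget through $\sum_d \bE[m^d_T] = T$.

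The main obstacle --- and the step that produces the characteristic $1/\log k$ factor --- is the averaging/pigeonhole argument that follows. Because the budget $T$ must be shared across $k$ arms placed at geometrically spaced gaps, there is necessarily at least one instance $\btheta_j$ on which the algorithm under-allocates relative to what the oracle exponent $\Gamma^*$ would demand; quantifying this deficit against the harmonic-type sum $\sum_i 1/i$, which is of order $\log k$ and arises directly from the geometric construction, is precisely where the $\log k$ penalty enters. The delicate part is keeping the change of measure tight enough to preserve the exponent while simultaneously matching the large-deviation rate $\Gamma^*$ --- defined here through the posterior decay of Theorem~1(2) --- to the complexity measure used by \citet{Carpentier2016}. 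Once these are aligned, reading off $R_{\btheta}(T) \ge \Delta_{\min}\exp(-\frac{C'}{\log k}\Gamma^* T)$ on the worst-case instance and absorbing $\Delta_{\min}$ into $C$ for $T > T_0$ completes the argument.
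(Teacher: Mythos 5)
Your overall architecture matches the paper's: both proofs exhibit the $k$ Bernoulli instances of \citet{Carpentier2016}, invoke their fixed-budget minimax lower bound (which the paper simply cites as Lemma~\ref{lem_lower_inner} rather than re-deriving via change of measure and pigeonhole, as you propose to do), and then translate the resulting exponent into the $\Gamma^*$ language of \citet{Kasy2021}. The reduction $R_{\btheta}(T)\ge \Delta_{\min}\,\bP(d^*_T\neq d^{(1)}\mid\btheta)$ and the absorption of the polynomial prefactor for large $T$ are consistent with how the paper handles the $\tfrac{1}{6K}$ and $2\sqrt{T\log(6TK)}$ terms.

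The genuine gap is the translation step itself, which you flag as ``delicate'' but never carry out, and which is in fact the only new mathematical content of the proof. The Carpentier--Locatelli bound gives an exponent of order $\frac{T}{(\log k)\,H(\btheta)}$ with $H(\btheta)=\sum_{d\ne d^*}(\Delta^d)^{-2}$, whereas the theorem is stated in terms of $\Gamma^*$. To conclude $R_\theta(T)\ge\exp(-\frac{C}{\log k}\Gamma^* T)$ you need an inequality in a specific direction, namely a \emph{lower} bound $\Gamma^*(\btheta)\ge \frac{c}{H(\btheta)}$, so that $\frac{1}{H}\le \frac{\Gamma^*}{c}$ can be substituted into the exponent. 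This is the paper's Lemma~\ref{lem_gamma}: it does not follow from the ``posterior decay'' definition of $\Gamma^*$ in Theorem~1(2) that you appeal to, but from the explicit variational characterization \eqref{ineq_glynnopt} (Lemma~3 of \citet{Kasy2021}, via \citet{glynn2004large}), combined with Pinsker's inequality $\dKL(x,\theta)\ge 2(x-\theta)^2$ and the explicit feasible allocation $\rho^j\propto(\theta^1-\theta^j)^{-2}$, yielding $\Gamma^*\ge\frac{1}{4H(\btheta)}$. Without this lemma your argument bottoms out at an $H$-based bound and never reaches the stated $\Gamma^*$-based conclusion; with it, the theorem follows exactly as in the paper with $C=800$. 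A secondary, more minor mismatch: the paper's hard instances (Definition~\ref{def_counterexample}) use arithmetically spaced gaps $f^d=d/(4k)$, not the geometrically spaced gaps you describe, and the $\log k$ loss is imported wholesale from \citet{Carpentier2016} rather than produced by a harmonic-sum computation of your own.
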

We present the proof of Theorem~\ref{thm_lower} in Appendix~\ref{app:carpentier}. To make the connection with the result of \citet{Kasy2021} clearer, we transform the inequality in Theorem~\ref{thm_lower} as for any $T_n$,
\begin{equation} \label{ineq_cgammarate_kasynot}
-\frac{1}{T_n}\log  \mathrm{R}_{\btheta}(T_n) \le \frac{C}{\log(k)} \Gamma^*(\btheta).
\end{equation}
Since $\log(k) \rightarrow \infty$ as $k \rightarrow \infty$, the expected policy regret decays arbitrarily slower than what Theorem~1~(3) claims.

\section{Corrected Main Theorem}
\label{sec:our_correct}
In this paper, we provide a correction to all three statements in Theorem~1. Since the asymptotic optimality on the expected policy regret, used in original Theorem~1~(3), is unattainable, we propose a new objective-- posterior weighted policy regret-- and derive the asymptotic optimality under this objective. We introduce the following objective
\begin{align*}
{\mathrm{W}}_{\btheta}(T) &= \sum_{d\in\{1,\dots,k\}}\Delta^{d}\cdot p^d_T \\
&=  \sum_{d\in\{1,\dots,k\}}\Delta^{d} \cdot \mathbb{P}\left(d = \argmax{d'\in\{1,\ldots,k\}}\, \tilde{\theta}^{d'} \mid \bm m_{T-1}, \bm r_{T-1}\right),
\end{align*}
where the gap $\Delta^d = \theta^{d^{(1)}} - \theta^d$ depends on the unknown true parameter $\btheta = (\theta^1,\ldots,\theta^k)$, while the posterior probability above is based on the random sample $\tilde{\bm{\theta}} = (\tilde{\theta}^1,\ldots,\tilde{\theta}^k)$ drawn from the posterior. By comparing the posterior probability of making the wrong decision $\sum_{d\neq d^{(1)}} p^d_T$, this objective takes into account the magnitude between the best policy and sub-optimal policies.

Our objective is a synthesis of the frequentist objective (policy regret) and Bayesian objective {(expected posterior policy regret)}, as it combines the regret for not choosing the ``true'' policy (frequentist), which is unobserved by the policymaker, and the magnitude reflected in the posterior distribution (Bayesian), which the policy maker updates.
Unlike the expected policy regret, defined in \citet{Kasy2021}, we can derive the optimality of the exploration sampling in view of this objective.

\begin{theorem}[Corrected Theorem~1 of \citet{Kasy2021}]
\label{thm:correct_kasy}
Consider exploration sampling, for Bernoulli bandits with $\Beta(\alpha_0^d,\beta_0^d)$ priors for each $d \in \{1, \ldots, k\}$, where the constants $\alpha_0^1, \ldots, \alpha_0^k, \beta_0^1, \ldots ,\beta_0^k>0$ can be arbitrary. Let the wave
size be fixed $N_t = N \geq 1$. 
Assume that the optimal arm $d^{(1)} = \underset{d\in\{1,\ldots,k\}}{\rm{argmax}}\,\theta^d$ is unique and that $\theta^{d^{(1)}} < 1$. The following statements hold:
\begin{description}
    \item[(1)] The share of observations $\frac{m^{d^{(1)}}_T}{NT}$ assigned to the best treatment $d^{(1)}$ converges almost surely to $\frac{1}{2}$, that is, 
    \[
    \frac{m^{d^{(1)}}_T}{NT} \xrightarrow{\mathrm{a.s.}} \frac{1}{2}.
    \]
    \item[(2)]
    The share of observations $\frac{m^{d}_T}{NT}$ assigned to each treatment $d\neq d^{(1)}$ converges almost surely to a non-random share $\lambda^d$, that is, 
    \[
    \frac{m^d_T}{NT} \xrightarrow{\mathrm{a.s.}} \lambda^d, \quad \forall d\neq d^{(1)},
    \]
    where the limit assignment shares $\bm{\lambda} = (\lambda^1, \ldots, \lambda^k)$ (with $\lambda^{d^{(1)}} = \frac{1}{2}$) is such that $-\frac{1}{NT}\log\left( \sum_{d\neq d^{(1)}} p^d_T\right) \xrightarrow{\mathrm{a.s.}} \Lambda^*$, which is the optimal value of the optimization problem defined in \eqref{ineq_shangopt}.
    \item[(3)] The posterior weighted policy regret converges almost surely to $0$ at the same rate $\Lambda^*$, that is,
    \[-\frac{1}{NT}\log \mathrm{W}_{\btheta}(T) \xrightarrow{\mathrm{a.s.}}  \Lambda^*.\]
    No algorithm with limit assignment shares  
    $\hat{\bm{\lambda}}\neq \bm{\lambda}$ with $\hat{\lambda}^{d^{(1)}} = \frac{1}{2}$
    exists for which $\mathrm{W}_{\btheta}(T)$ goes to $0$ at a faster rate than $\Lambda^*$.
\end{description}
\end{theorem}

To prove Theorem~\ref{thm:correct_kasy}, we require Lemmata~2 (2), 4--6 in \citet{Kasy2021} (Lemmata~1, 2 (1), and 3 are irrelevant for the corrected theorem). 
The original version of these lemmata has technical issues that require correction. Lemma~2 (2) in \citet{Kasy2021} cites \citet{Russo2016}, which cannot be applied to Beta-Bernoulli models with  $\mathrm{Beta}(\alpha^d_0,\beta^d_0)$ priors.
Lemma~4 cites \citet{Russo2016}, though, the statement is adapted from the original: while Lemma~4 supposes that $m^{d^{(1)}}_T/(NT)\xrightarrow{\mathrm{p}} \beta$ for a constant $\beta > 0$, \citet{Russo2016} supposes $\lim_{T\to\infty}m^{d^{(1)}}_T/(NT) =\beta$.
Moreover, Lemmata~2 (2), 4--6 are incorrectly adapted from \cite{Russo2016} regarding the convergence of random variables.
To address these issues, we show the corrected versions of Lemmata~2 (2), 4--6 in Appendix~\ref{sec:corrected_lemma}.


\begin{appendix}

\section*{}
\end{appendix}

\begin{appendix}

\section{Proof of Theorem~\ref{thm_lower}}\label{app:carpentier}
Let us define the problem complexity as
\begin{equation*} 
H(\btheta) = \sum_{d \notin {\argmax{d' \in \{1, \ldots, k\}} \theta^{d'}}} \frac{1}{\left(\Delta^d\right)^2} 
= \sum_{d \notin {\argmax{d' \in \{1, \ldots, k\}} \theta^{d'}}} \frac{1}{\left(\theta^{d^{(1)}} - \theta^d\right)^2},
\end{equation*}
where $d^{(1)} \in \underset{d'\in\{1,\ldots,k\}}{\rm{argmax}}\ \theta^{d'}$.
To show Theorem~\ref{thm_lower}, we use the following results. 

\begin{lemma}{\rm (Lower bound on $\Gamma^*(\btheta)$)}\label{lem_gamma}
Consider $\btheta =(\theta^1,\ldots,\theta^k)$ such that $d^{(1)} = \underset{d\in\{1,\ldots,k\}}{\rm{argmax}}\ \theta^d$ is unique. We have
\begin{equation*}
\Gamma^*(\btheta) \ge \frac{1}{2H(\btheta)}.
\end{equation*}
\end{lemma}

\begin{proof}
Recall that $\Gamma^* (\btheta)$ is the solution of the optimization problem defined in \eqref{ineq_glynnopt}. We have,
\begin{align*}
\Gamma^*(\btheta)
&= \max_{\bm{w}}\min_{d\neq d^{(1)}} G_j \left(w^{d^{(1)}}, w^d \right) 
\\
&\ge 2 \max_{\bm{w}}\min_{d\neq d^{(1)}} \min_{x \in \left[\theta^d, \theta^{d^{(1)}}\right]}  
\left[w^{d^{(1)}} \left(x - \theta^{d^{(1)}}\right)^2 + w^d \left(x - \theta^d\right)^2\right]\\
&\quad\quad \text{\ \ \ (by Pinsker's inequality)}
\\
&\ge 2 \max_{\bm{w}}\min_{d\neq d^{(1)}} \min_{x \in \left[\theta^d, \theta^{d^{(1)}}\right]}  
\min\{w^{d^{(1)}}, w^d\} \cdot \left[\left(x - \theta^{d^{(1)}}\right)^2 + \left(x - \theta^d\right)^2 \right]\\
&= 2 \max_{\sum_{d\neq d^{(1)}}w^d=1/2}\min_{d\neq d^{(1)}} \min_{x \in \left[\theta^d, \theta^{d^{(1)}}\right]}  
w^d  \left[\left(x - \theta^{d^{(1)}}\right)^2 + \left(x - \theta^d\right)^2 \right]\\
&\quad\quad\ \ \ \left(\text{since }w^{d^{(1)}} = 1/2\geq w^d \right)\\
&\ge  \max_{\sum_{d\neq d^{(1)}}w^d=1/2}\min_{d\neq d^{(1)}} w^{d} \left(\theta^{d^{(1)}} - \theta^d\right)^2\\
&\quad\quad\ \ \ \left(\text{since the inner minimization achieves optimality at }x=\frac{\theta^{d^{(1)}}+\theta^d}{2}\right)\\
&\geq \frac{1}{2}\left(\sum_{d \ne d^{(1)}}\frac{1}{\left(\theta^{d^{(1)}} - \theta^d\right)^2}\right)^{-1}   
= \frac{1}{2H(\btheta)}
\end{align*}
where the last inequality holds by considering a specific choice $\left(w^d\right)_{d\neq d^{(1)}}$ with $\sum_{d\neq d^{(1)}}w^d=1/2$, such that
\[
w^d = \frac{1}{2}\frac{1}{\left(\theta^{d^{(1)}} - \theta^d\right)^2}\left(\sum_{d' \ne d^{(1)}}\frac{1}{\left(\theta^{d^{(1)}} - \theta^{d'}\right)^2}\right)^{-1}, \quad \forall d\neq d^{(1)}.
\]
\end{proof}

Next, we define $k$ problem instances as follows. Let us denote the parameter of the problem instance $i\in\{1,\dots, k\}$ by $\btheta_{i} = (\theta^1_{i}, \dots, \theta^k_{i})$, i.e., the instance $i$'s average potential outcome of the treatment $d$ is denoted as $\theta^d_i$. 
\begin{definition}{\rm (Problem instances)}\label{def_counterexample}
The first problem instance denoted by $\btheta_{1} = \left(\theta_1^1,\ldots,\theta_1^k\right)$ is defined as
\begin{align*}
\theta^{1}_{1} &= \frac{1}{2} \quad \text{and} \quad \theta^d_{1} = \frac{1}{2} - f^d \quad \forall d\neq 1,
\end{align*}
where $f^d := \frac{d}{4k}$. \\
The other $(k-1)$ problem instances are denoted by $\btheta_{i}$ for $i = 2,3,\dots,k$, and each $\btheta_{i}=(\theta_i^1,\ldots,\theta_i^k)$ is defined as the ones where the average potential outcome of the treatment $i$ is replaced by $\frac{1}{2} + f^i$ so that treatment $i$ is the best treatment in the problem instance $i$, that is,
\begin{align*}
\theta^i_{i} = \frac{1}{2} + f^i \quad \text{and} \quad \theta^{d}_{i} &= \theta^{d}_{1} \quad \forall d\neq i.
\end{align*}
\end{definition}
We use the following proposition.
\begin{proposition}{\rm (Theorem~2 in \cite{Carpentier2016}.)}\label{lem_lower_inner}
Let $k \geq 2$. Consider the problem instances $\{\btheta_{1},\dots,\btheta_{k}\}$ in Definition \ref{def_counterexample}. For any algorithm, for each $T$, there exists at least one $\btheta \in \{\btheta_{1},\dots,\btheta_{k}\}$ such that 
\begin{equation*}
\mathrm{R}_{\btheta} (T) \ge \frac{1}{6} \exp\left(- \frac{60T}{h^* H\left(\btheta\right)} - 2\sqrt{T \log(6Tk)}\right)
\end{equation*}
where $h^*\geq \frac{3}{10}\log(k)$.
\end{proposition}
This proposition gives the following corollary.
\begin{corollary}
\label{cor_lower_inner} Let $k \ge 2$.  Consider the problem instances $\{\btheta_{1},\dots,\btheta_{k}\}$ in Definition \ref{def_counterexample}.
For any algorithm, there exists an instance $\btheta \in \{\btheta_{1},\dots,\btheta_{k}\}$ and an infinite subsequence of integers $\{T_n\}_{n=1}^\infty$ such that for any $T_n$, \[\frac{200T_n}{\log (k) H\left(\btheta\right)} \ge  2\sqrt{T_n \log(6T_nk)},\]
and thus
\begin{align*}
\mathrm{R}_{\btheta} (T_n) 
&\ge \frac{1}{6} \exp\left(- \frac{200T_n}{\log(k) H\left(\btheta\right)} - 2\sqrt{T_n \log(6T_nk)}\right)
\ge \frac{1}{6} \exp\left(- \frac{400T_n}{\log(k) H\left(\btheta\right)}\right).
\end{align*}
\end{corollary}
{Corollary \ref{cor_lower_inner} directly follows from the fact that $\sqrt{T \log T} = o(T)$ and Proposition \ref{lem_lower_inner}.}

Corollary~\ref{cor_lower_inner} states that the exponent of the expected policy regret is at most $\frac{400T_n}{\log(k) H\left(\btheta\right)}$ in one of the $k$ problem instances of Definition~\ref{def_counterexample}.
Combining Lemma~\ref{lem_gamma} and Corollary~\ref{cor_lower_inner}, for any $T_n$, 
\begin{align*}
-\frac{1}{T_n}\log \mathrm{R}_{\btheta} (T_n) &\le \frac{400}{\log(k) H\left(\btheta\right)}  \le \frac{800 \Gamma^*(\btheta)}{\log(k) } , 
\end{align*}
which implies \eqref{ineq_cgammarate_kasynot} with $C=800$.
Therefore, the statement in Theorem~1~(3) does not hold.  This concludes the proof.

\section{Corrected Lemmata}
\label{sec:corrected_lemma}
Here, we correct the lemmata associated with the corrected theorem.
\begin{lemma}[Corrected Lemma~2 (2) of \citet{Kasy2021}]
\label{lem:kasy22}
Consider $\Beta(\alpha_0^d,\beta_0^d)$ priors for each $d \in \{1, \ldots, k\}$. Under any allocation rule satisfying $\frac{m^{d^{(1)}}_T}{NT}~\to~\frac{1}{2}$,
\begin{align*}
    \limsup_{T\to\infty} -\frac{1}{NT}\log\left(\sum_{d\neq d^{(1)}}p^d_T \right)\leq \Lambda^*,
\end{align*}
and under any  allocation rule satisfying $\frac{m^{d}_T}{NT} \to \lambda^{d}$ for each $d\in\{1,\dots, k\}$,
\begin{align*}
    \lim_{T\to\infty} -\frac{1}{NT}\log\left(\sum_{d\neq d^{(1)}}p^d_T \right)= \Lambda^*.
\end{align*}
\end{lemma}

\begin{lemma}[Corrected Lemma~4 of \citet{Kasy2021}. From Lemma~12 of \citet{Russo2016} and Lemma~30 of \citet{Shang2020}]
\label{lem:kasy4}
Consider any adaptive allocation rule. If 
\begin{align}
\label{eq:lem:kasy4:cond1}
\lim_{T\to\infty}\frac{1}{T} \sum_{t=1}^T q_t^{d^{(1)}} = \frac{1}{2}
\end{align}
and 
\begin{align}
\label{eq:lem:kasy4:cond2}
    \sum^\infty_{t=1}q^d_T\cdot\mathbbm{1}\left\{\frac{1}{T} \sum_{t=1}^T q_{t}^{d} > \lambda^d + \delta\right\} < \infty\quad \forall d\neq d^{(1)}, \delta > 0,
\end{align}
then 
\[\lim_{T\to\infty}\frac{1}{T} \sum_{t=1}^T q_t^{d^{(1)}} = \lambda^d.\]
\end{lemma}
\begin{lemma}[Corrected Lemma~5 of \citet{Kasy2021}. From Lemma~13 of \citet{Russo2016} and Lemma~29 of \citet{Shang2020}]
\label{lem:kasy5}
Fix any $\xi > 0$ and $d\neq d^{(1)}$. Under any allocation rule, if ${\lim_{T\to \infty} m^d_T/(NT) = 1/2}$, there exists $\xi' > 0$ and a sequence $\varepsilon_T$ with ${ \lim_{T\to \infty}\varepsilon_T =0}$ such that for any $T\in\mathbb{N}$,
\begin{align*}
    \frac{m^d_T}{NT}\geq \lambda^d + \xi \Longrightarrow \frac{p^d_T}{\max_{d\neq d^{(1)}}p^d_T} \leq \exp\left(-T(\xi' + \varepsilon_T)\right),
\end{align*}
almost surely.
\end{lemma}

\begin{lemma}[Corrected Lemma~6 of \citet{Kasy2021}.]
\label{prp_shang2020}
Denote with $\overline{D}$ the arms that are sampled only a finite amount of times:
\begin{align*}
    \overline{D} = \left\{ d\in\{1,\dots, k\}: \forall t, m^d_t < \infty \right\}.
\end{align*}
If $\overline{D}$ is empty, $p^{d^{(1)}}_t$ converges almost surely to $1$. If $\overline{D}$ is non-empty, then for every $d\in \overline{D}$, we have $\liminf_{T\to\infty} p^d_T > 0$ almost surely. 
\end{lemma}

Lemma~\ref{lem:kasy22} basically follows from Theorem~6 of \citet{Shang2020}. We prove this lemma in Appendix~\ref{appdx:lem:kasy22}. We extend the result from Beta-Bernoulli bandit model with the $\mathrm{Beta}(1, 1)$ prior to that with $\Beta(\alpha_0^d,\beta_0^d)$ priors for each $d \in \{1, \ldots, k\}$, where the constants $\alpha_0^1, \ldots, \alpha_0^k, \beta_0^1, \ldots ,\beta_0^k>0$ can be arbitrary. 

Lemmata~\ref{lem:kasy4} and \ref{lem:kasy5} are corrected citation of \citet{Russo2016} and \citet{Shang2020}.

We derive Lemma~\ref{prp_shang2020} with the help of Lemma~28 in \citet{Shang2020}.

\section{Auxiliary Results}\label{subsec_aux}
In addition to Lemmata~\ref{lem:kasy22}--\ref{prp_shang2020}, which correspond to Lemmata~2 (2), 4--6 in \citet{Kasy2021}, we additionally use the following results, Proposition~\ref{prp:lln}, \ref{prp:from_kasy}, and \ref{lem:lem_26_shang}, from \citet{Shang2020} and \citet{Kasy2021}. We further state and prove Lemma~\ref{lem:lowerbound_beta} and \ref{lem:lowerbound_XY}, which is required for the proof of Lemma~\ref{lem:kasy22}.

\begin{proposition}[From Lemma~4 of \citet{Shang2020}]
\label{prp:lln}
There exists a random variable $W$ with $\mathbb{E}\left[\exp\left( \lambda W\right)\right] < \infty$ for any $\lambda > 0$ such that 
\begin{align*}
    \ \left| m^d_T - N\sum^T_{t=1}q^d_t \right|\leq W\sqrt{(NT+1)\log(e^2 + NT)},\; \text{almost surely}, \; \forall T\in\mathbb{N}, d\in\{1,\ldots,k\}.
\end{align*}
\end{proposition}

\begin{proposition}[From Step~2 of the proof of Theorem~1 in \citet{Kasy2021}]
\label{prp:from_kasy}
Under exploration sampling, for each $d\in\{1,2,\dots, k\}$ and all $t\in\mathbb{N}$,
\begin{align*}
    \frac{p^d_t}{p^d_t + 1} \leq q^d_t \leq \frac{1}{2}. 
\end{align*}
\end{proposition}

\begin{proposition}[Lemma~26 of \cite{Shang2020}]\label{lem:lem_26_shang}
Let $X \sim \mathrm{Beta}(a_0,a_1)$ and $Y \sim \mathrm{Beta}(a_2, a_3)$ {such that} 
$$0 < \frac{a_0-1}{a_0+a_1 -1} < \frac{a_2-1}{a_2+a_3 -1}.$$  
Then, we have
$$
\mathbb{P}(X>Y) \le D \exp(-C),
$$
where 
$$
C = \inf_{\frac{a_0-1}{a_0+a_1 -1} \le y \le \frac{a_2-1}{a_2+a_3 -1}} C_{a_0,a_1}(y) + C_{a_2,a_3}(y), 
$$
$$
 C_{a_0,a_1}(y)  = (a_0 + a_1 -1)\dKL\left( \frac{a_0-1}{a_0+a_1 -1}, y\right),
$$
and
$$
D = 3 + \min \left\{ C_{a_0,a_1}\left(\frac{a_2-1}{a_2+a_3 -1}\right) ,  C_{a_2,a_3}\left(\frac{a_0-1}{a_0+a_1 -1}\right) \right\}.
$$
\end{proposition}

Proposition~\ref{prp:lln} is used for showing almost sure convergence of the shares of observations (see Remark \ref{rem_as}). Here, we note that from Proposition~\ref{prp:lln}, we can insist that with probability $1$,
\begin{align*}
    \lim_{T\to\infty}\frac{m^{d}_T}{NT} = \lambda^d\ \Longleftrightarrow\ \lim_{T\to\infty}\frac{1}{T} {\sum^T_{t=1}q^{d}_t} = \lambda^d.
\end{align*}

Besides, to prove Lemma~\ref{lem:kasy22}, we show the following lemmata.
\begin{lemma}[Lower bound on the deviation probability of Beta distribution]\label{lem:lowerbound_beta}
Let $X \sim \Beta(a_0, a_1)$. We have
\begin{align*}
    & \mathbb{P}(X \ge x) \ge \frac{\exp\left( - (a_0 + a_1 - 1) \dKL \left(\frac{a_0-1}{a_0 + a_1 -1},x\right)\right)}{a_0 + a_1},
    \\
   \text{and}\quad \quad & \mathbb{P}(X \le x) \ge \frac{\exp\left( - (a_0 + a_1 - 1) \dKL \left(\frac{a_0-1}{a_0 + a_1 -1},x\right)\right)}{a_0 + a_1}.
\end{align*}
\end{lemma}
\begin{proof}[Proof of Lemma~\ref{lem:lowerbound_beta}]
We use following facts (see e.g., Appendix I.1 of \cite{Shang2020}.). Let $F_{a_0,
a_1}^{\text{Beta}}(x)$ be the cumulative distribution function of a Beta distribution with parameters $a_0$ and $a_1$. Similarly, let $F_{a_0,a_1}^{\text{B}}(x)$ be the cumulative distribution function of a Binomial distribution with parameters $a_0$ and $a_1$. We have a following relationship (Beta-Binomial trick)
\begin{align*}
    F_{a_0,
a_1}^{\text{Beta}}(x) = 1 - F_{a_0 + a_1 - 1,
x}^{\text{B}}( a_0 -1).
\end{align*}
Therefore, we have
\begin{align*}
    \mathbb{P}(X \ge x) & = \mathbb{P}(B(a_0 + a_1-1,x) \le a_0 - 1) 
    \\
    & =  \mathbb{P}(B(a_0 + a_1-1,1 - x) \ge a_1),
\end{align*}
where $B(a_0,a_1)$ is a Binomial Distribution with parameters $a_0$ and $a_1$. From Sanov's inequality (Appendix I.1 of \cite{Shang2020}), we have
\begin{align*}
   \frac{\exp\left(  - n \dKL(x/n, p)\right)}{n+1}\le \mathbb{P}(B(n,p) \ge x)\le  \exp\left(  - n \dKL(x/n, p)\right).
\end{align*}
We get
\begin{align*}
    \mathbb{P}(X \ge x) & =  \mathbb{P}(B(a_0 + a_1-1,1 - x) \ge a_1)
    \\
    & \ge \frac{1}{a_0 + a_1} \exp\left( - (a_0 + a_1 - 1)\dKL\left( \frac{a_1}{a_0 + a_1 - 1},1-x\right) \right)
    \\
    & = \frac{1}{a_0 + a_1} \exp\left( - (a_0 + a_1 - 1)\dKL\left( \frac{a_0 - 1}{a_0 + a_1 - 1},x\right) \right),
\end{align*}
and 
\begin{align*}
    \mathbb{P}(X \le x) & = \mathbb{P}(B(a_0 + a_1-1,x) \ge a_0 - 1)
    \\
    & \ge \frac{1}{a_0 + a_1} \exp\left( - (a_0 + a_1 - 1)\dKL\left( \frac{a_0 - 1}{a_0 + a_1 - 1},x\right) \right).
\end{align*}
This concludes the proof.
\end{proof}
Then, based on Lemma~\ref{lem:lowerbound_beta}, we have the following lemma, which has an exponent that matches the exponent of Proposition~\ref{lem:lem_26_shang}.
\begin{lemma}\label{lem:lowerbound_XY}
Let $X \sim \mathrm{Beta}(a_0,a_1)$ and $Y \sim \mathrm{Beta}(a_2, a_3)$ {such that} 
$$0 < \frac{a_0-1}{a_0+a_1 -1} < \frac{a_2-1}{a_2+a_3 -1}.$$  
Then, we have
$$
\mathbb{P}(X>Y) \ge D \exp(-C),
$$
where 
$$
C = \inf_{\frac{a_0-1}{a_0+a_1 -1} \le y \le \frac{a_2-1}{a_2+a_3 -1}} C_{a_0,a_1}(y) + C_{a_2,a_3}(y), 
$$
$$
 C_{a_0,a_1}(y)  = (a_0 + a_1 -1)\dKL\left( \frac{a_0-1}{a_0+a_1 -1}, y\right),
$$
and
$$
D = \frac{1}{(a_0+ a_1)(a_2+a_3)}.
$$
\end{lemma}
\begin{proof}[Proof of Lemma~\ref{lem:lowerbound_XY}]
For each $y \in \left[\frac{a_0-1}{a_0+a_1 -1},  \frac{a_2-1}{a_2+a_3 -1}\right]$, we have
\begin{align*}
    & \mathbb{P}(X>Y) 
    \\
    & \ge \mathbb{P} \left(\{X > y\}\cap \{Y < y\}\right)
    \\
    & \ge \frac{\exp\left( - \left( (a_0 + a_1 - 1) \dKL \left(\frac{a_0-1}{a_0 + a_1 -1},y\right) + (a_2 + a_3 - 1) \dKL \left(\frac{a_2-1}{a_2 + a_3 -1},y\right)\right)\right)}{(a_0 +a_1)(a_2 + a_3)},
\end{align*}
where the last inequality is from Lemma~\ref{lem:lowerbound_beta}. 
Optimizing the right hand side of the previous inequality over $y$, we conclude the proof.
\end{proof}

\section{Proof of Theorem~\ref{thm:correct_kasy}}
\label{sec:correction}

In this section, we show the proofs of Theorem~\ref{thm:correct_kasy} (1) and (2) in Section~\ref{sec:corrected_proof} and  Theorem~\ref{thm:correct_kasy} (3) in Section~\ref{subsec:step3}. The proof procedure in Section~\ref{sec:corrected_proof} follows that of \citet{Kasy2021} as closely as possible.


\subsection{Proof of Theorem~\ref{thm:correct_kasy} (1) and (2)}
\label{sec:corrected_proof}
From Lemma~\ref{lem:kasy22}, under any allocation rule, given the event $\lim_{T\to \infty} \frac{m^{d}_T}{NT} = \lambda^{d}$, which occurs with probability $1$; we can conclude that
\begin{align*}
    -\frac{1}{NT}\log \left( \sum_{d\neq d^{(1)}}p^d_T \right) \xrightarrow{\mathrm{a.s.}} \Lambda^*.
\end{align*}
To prove Theorem~\ref{thm:correct_kasy}~(1) and (2), it suffices to show that under the exploration sampling, for each $d\in\{1,\dots,k\}$,
\begin{align}
\label{eq:allo_convergence}
    \frac{m^{d}_T}{NT} \xrightarrow{\mathrm{a.s}} \lambda^d.
\end{align}


To show that the exploration sampling satisfies \eqref{eq:allo_convergence}, we correct the following three steps in the proof of \citet{Kasy2021}, i.e., 
\begin{description}
    \item[Step~1:] Each treatment is assigned infinitely often, that is, $m^d_T \xrightarrow{\mathrm{a.s.}} \infty,\forall d \in \{1, \ldots, k\}$.
    \item[Step~2:] The share of observations $m^{d^{(1)}}_T/(NT)$ assigned to the best treatment $d^{(1)}$ converges to $\lambda^{d^{(1)}} = 1/2$ almost surely as $T\to\infty$.
    \item[Step~3:] The share of observations $m^{d}_T/(NT)$ assigned to each treatment $d\neq d^{(1)}$ converges to $\lambda^d$ almost surely as $T\to\infty$.
\end{description}
We use Step~1 to show the almost sure convergence of the posterior probability. Then, because the share of observations is determined by the posterior probability, we can show Step~2 and 3, which directly implies \eqref{eq:allo_convergence}.

\begin{proof}[Proof of Theorem~\ref{thm:correct_kasy} (1) and (2)]$\ $

\textbf{Step~1: Each treatment is assigned infinitely often.} We show $m^d_T \xrightarrow{\mathrm{a.s.}} \infty$ for each $d\in\{1,2,\dots, k\}$ using proof by contradiction.

Suppose that there exists $d'\in\{1,2,\dots, k\}$, such that $\lim_{T\to\infty} m^{d'}_T < \infty$. Under the exploration sampling, from Proposition~\ref{prp:from_kasy}, we have $q^{d'}_T \geq \frac{p^{d'}_T}{p^{d'}_T  + 1}$; therefore, by Lemma~\ref{prp_shang2020}, if $d'\in\overline{D} = \{d\in\{1,\dots, k\}: \forall T, m^d_T < \infty\}$, then $\liminf_{T\to\infty} p^{d'}_T > 0$, which implies that $\sum^\infty_{t=1}q^{d'}_T = \infty$. By Proposition~\ref{prp:lln}, we have $m^d_T \xrightarrow{\mathrm{a.s.}} \infty$. This causes a contradiction with probability $1$. Therefore, $m^d_T \xrightarrow{\mathrm{a.s.}} \infty$, for all $d\in\{1,\dots, k\}$. 

\textbf{Step~2: The share of observations $m^{d^{(1)}}_T/(NT)$ assigned to the best treatment $d^{(1)}$ converges to $1/2$ almost surely as $T\to\infty$.}
Because $m^d_T \xrightarrow{\mathrm{a.s.}} \infty$ for all $d\in\{1,\dots, k\}$, we have $p^{d^{(1)}}_T\xrightarrow{\mathrm{a.s.}}1$ from Lemma~\ref{prp_shang2020}. Then, from Proposition~\ref{prp:from_kasy}, we conclude that $ m^{d^{(1)}}_T/(NT) \xrightarrow{\mathrm{a.s.}}1/2$. 

\textbf{Step~3: The share of observations $m^{d}_T/(NT)$ assigned to each treatment $d\neq d^{(1)}$ converges to $\lambda^d$ almost surely as $T\to\infty$.} Our final step is to show \eqref{eq:allo_convergence}. From Proposition~\ref{lem:kasy4}, we can obtain this result if \eqref{eq:lem:kasy4:cond1} and \eqref{eq:lem:kasy4:cond2} hold almost surely. 

Firstly, using Proposition~\ref{prp:lln},  $m_T^{d^{(1)}}/(NT)\xrightarrow{\mathrm{a.s.}}1/2$ (the result of {\bf Step~2}) leads to
\begin{align*}
    \frac{1}{T} \sum_{t=1}^T q_t^{d^{(1)}} \xrightarrow{\mathrm{a.s.}} \frac{1}{2}.
\end{align*}
Thus, \eqref{eq:lem:kasy4:cond1} holds. 

Next, we check that \eqref{eq:lem:kasy4:cond2} holds. For $d \ne d^{(1)}$, let us define an event $\mathcal{F}^d$ as
\begin{align*}
    \mathcal{F}^d = \left\{ \lim_{T \to \infty} p_T^d = 0 \quad \text{and} \quad\forall T\in\mathbb{N},\ \left| m^d_T - N\sum^T_{t=1}q^d_t \right|\leq W\sqrt{(NT+1)\log(e^2 + NT)}\right\},
\end{align*}
where $W$ is a random variable defined in Proposition~\ref{prp:lln}.
This event $\mathcal{F}^d$ occurs with probability $1$ by Lemma~\ref{prp_shang2020} and Proposition~\ref{prp:lln}. Because each treatment is assigned infinitely often from Step~1, $\lim_{T \to \infty} p_T^d = 0$ for $d \ne d^{(1)}$ almost surely from Lemma~\ref{prp_shang2020}. The second element of $\mathcal{F}^d$ occurs with probability $1$ from Proposition~\ref{prp:lln}.

Under this event $\mathcal{F}^d$, for each constant $\xi > 0$, there exists $s$ such that for all $T \geq s$, we have
\begin{align}
\label{eq:rand_bound}
    \left| \frac{1}{T}\sum_{t=1}^T q_t^d -  \frac{m_T^d}{NT} \right| \leq \xi.
\end{align}
This is because from the second element of $\mathcal{F}^d$, for all $T$,
\[\left| \frac{m^d_T}{NT} - \frac{1}{T}\sum^T_{t=1}q^d_t \right|\leq \frac{W}{NT}\sqrt{(NT+1)\log(e^2 + NT)}\]
holds, which implies that for each each constant $\xi > 0$, \eqref{eq:rand_bound} holds for sufficiently large $T$.

 Then, the following relationship holds almost surely:
\begin{align*}
    \mathbbm{1}\left\{ \frac{1}{T}\sum_{t=1}^T q_t^d \geq \lambda^d + 2 \xi \right\} \leq \mathbbm{1}\left\{ \frac{m_T^d}{NT} \geq \lambda^d + \xi \right\},
\end{align*}
where for an event $\mathcal{E}$, $\mathbbm{1}\left\{\mathcal{E}\right\} = 1$ if the event $\mathcal{E}$ occurs.  Under the event $\mathcal{F}^d$, the exists $t_0>0$ such that for all $t\ge t_0$, $ p_t^d \le 1/2$. 
As \citet{Kasy2021} shows in Step~3 of the proof in Theorem~1, when $\max_{d\neq d^{(1)}} p^{d}_t \leq 1/2$, we have
\begin{align*}
    q^d_t \leq 2\frac{p^d_t}{\max_{d\neq d^{(1)}} p^{d}_t}.
\end{align*}

Besides, Lemma~\ref{lem:kasy5} insists that given the event ${m_T^d}/{NT} \to 1/2$,  there exists $\xi' > 0$ and a sequence $\varepsilon_T$ with $\varepsilon_T \to 0$ such that for any $T\in\mathbb{N}$, 
\[\frac{m_T^d}{NT} \geq \lambda^d + \xi \Longrightarrow \frac{p^d_T}{\max_{d\neq d^{(1)}}p^d_T} \leq \exp\left(-T(\xi' + \varepsilon_T)\right).\]
Therefore, for $d\neq d^{(1)}$, under the event $ \mathcal{F}^d$, the following inequality holds with probability $1$.
\begin{align*}
    \sum^T_{t\geq \max\{s,t_0\}}q^d_t\mathbbm{1}\left\{  \frac{1}{T}\sum_{t=1}^T q_t^d \geq \lambda^d + 2 \xi \right\} & \leq \sum^T_{t\geq \max\{s,t_0\}}q^d_t\mathbbm{1}\left\{\frac{m_T^d}{NT}  \geq \lambda^d + \xi \right\}
    \\
    & \leq \sum^T_{t\geq \max\{s,t_0\}}\exp\left(-t(\xi' + \varepsilon_t)\right) 
    \\
    & < \infty. 
\end{align*}
Therefore, \eqref{eq:lem:kasy4:cond2} holds with probability 1. 
By combining these results, from Lemma~\ref{lem:kasy4}, \eqref{eq:allo_convergence} holds. This concludes the proof.
\end{proof}

\subsection{Proof of Theorem~\ref{thm:correct_kasy} (3)}
\label{subsec:step3}
We prove Theorem~\ref{thm:correct_kasy} (3) by using Lemma~\ref{lem:kasy22}. The proof consists of two parts: derivation of the upper bound under the exploration sampling and lower bound under any allocation rule.
\begin{proof}[Proof of Theorem~\ref{thm:correct_kasy} (3)]
First, we prove $-\frac{1}{NT}\log \mathrm{W}_{\btheta}(T) \xrightarrow{\mathrm{a.s.}}  \Lambda^*$. The logarithmic posterior policy regret can be decomposed as 
\begin{align*}
    \log \mathrm{W}_{\btheta}(T) &= \log \left(\sum_{d\in\{1,\dots,k\}}\Delta^{d}\cdot p^d_T\right)\\
    &= \log \left( \sum_{d\neq d^{(1)}} \Delta^{d}\cdot p^d_T\right) \leq \log \left(\max_{d\in\{1,\ldots,k\}} \Delta^{d} \sum_{d\neq d^{(1)}}  p^d_T\right).
\end{align*}
Therefore,
\[-\frac{1}{NT}\log \mathrm{W}_{\btheta}(T) \geq -\frac{1}{NT}\log \left(\max_{d\in\{1,\ldots,k\}} \Delta^{d}\right) - \frac{1}{NT}\log \left( \sum_{d\neq d^{(1)}} p^d_T\right).\]
While $\max_{d\in\{1,\ldots,k\}} \Delta^{d}$ is constant, $\sum_{d\neq d^{(1)}} p^d_T$ decays exponentially, that is, the first converges to $0$ and the second term converges to constant, $\Lambda^*$.
Similarly, we can also show that 
\[-\frac{1}{NT}\log \mathrm{W}_{\btheta}(T) \leq -\frac{1}{NT}\log \left(\min_{d\in\{1,\ldots,k\}} \Delta^{d}\right) - \frac{1}{NT}\log \left( \sum_{d\neq d^{(1)}} p^d_T\right).\]
By taking the limit from the lower and upper bounds, we show the statement.

Next, we prove the second statement of Theorem~\ref{thm:correct_kasy} (3), which is the theoretical lower bound of algorithms.  From Lemma~\ref{lem:kasy22}, under any adaptive allocation rule satisfying $\frac{m^{d^{(1)}}_T}{NT} \xrightarrow{\mathrm{a.s.}} \frac{1}{2}$,  $\limsup_{T\to\infty}-\frac{1}{NT}\log \left( \sum_{d\neq d^{(1)}}p^d_T \right) \leq \Lambda^*$ almost surely. This implies that under any adaptive allocation rule satisfying $\frac{m^{d^{(1)}}_T}{NT} \xrightarrow{\mathrm{a.s.}} \frac{1}{2}$, $\limsup_{T\to\infty}-\frac{1}{NT}\log \mathrm{W}_{\btheta}(T) \leq \Lambda^*$ almost surely. Thus, the limit of the logarithmic posterior policy regret under the exploration sampling matches the lower bound. This concludes the proof.
\end{proof}

\section{Proof of Lemma~\ref{lem:kasy22}}
\label{appdx:lem:kasy22}
The proof follows similar steps to the proof of Theorem~6 in \cite{Shang2020}. Note that the original proof has a technical issue, and we also correct it in our proof.\footnote{In particular, transformation of Eq.~\eqref{ineq:shangfix} fixes the issue of Theorem~6 of \cite{Shang2020} by utilizing Lemma \ref{lem:lowerbound_XY}.} 
\begin{proof}
We restate the set $\overline{D}$ for the sake of readability:
\begin{align*}
    \overline{D} = \left\{ d\in\{1,\dots, k\}: \forall t, m^d_t < \infty \right\}.
\end{align*}
First, we prove the first part of the Lemma~\ref{lem:kasy22}, a lower bound on the posterior probabilities $\sum_{d\neq d^{(1)}}p^d_T$, by giving separate proofs for the cases where $D$ is the empty set or not.

\textbf{Case 1. A lower bound on the posterior probabilities when $\overline{D}$ is not empty.} The posterior variance $ \sigma_{T, d}^2 $ is computed as 
\begin{align*}
    \sigma_{T, d}^2 & = \frac{\alpha_T^d\beta_T^d}{\left(\alpha_T^d + \beta_T^d\right)^2\left(\alpha_T^d + \beta_T^d+1\right)} = \frac{\left(\alpha_0^d + r_{T-1}^d\right)\left(\beta_0^d + m_{T-1}^d-  r_{T-1}^d\right)}{\left(\alpha_0^d + \beta_0^d + m_{T-1}^d\right)^2\left(\alpha_0^d + \beta_0^d + m_{T-1}^d + 1\right)}.
\end{align*}

 Thus, when $d \in \overline{D}$, we have $\liminf_{T \to \infty}\sigma_{T, d}>0$ and $\liminf_{T \to \infty}p_T^d>0$. This means that $\limsup_{T\to \infty}p_T^{d^{(1)}} <1$. Thus, we have 
\begin{align*}
    \limsup_{T\to\infty} -\frac{1}{TN}\log\left(\sum_{d\neq d^{(1)}}p^d_T \right) &  = \limsup_{T\to\infty} -\frac{1}{TN}\log\left(1 - p^{d^{(1)}}_T \right)
    \\
    & = 0 
    \\
    & \le \Lambda^*.
\end{align*}

\textbf{Case 2. A lower bound on the posterior probabilities when $\overline{D}$ is empty.} 
When $\overline{D}$ is empty, we have
\begin{align}
    \max_{d \neq d^{(1)}} \mathbb{P}\left(\tilde{\theta}^d \ge \tilde{\theta}^{d^{(1)}} | \bm m_{T-1}, \bm r_{T-1}\right) 
    & \le 1 - p_T^{d^{(1)}} \nonumber
    \\
    &\le \sum_{d\neq d^{(1)}} \mathbb{P}\left(\tilde{\theta}^d \ge \tilde{\theta}^{d^{(1)}} | \bm m_{T-1}, \bm r_{T-1}\right) \nonumber
    \\
    & \le (k-1) \max_{d \neq d^{(1)}} \mathbb{P}\left(\tilde{\theta}^d \ge \tilde{\theta}^{d^{(1)}} | \bm m_{T-1}, \bm r_{T-1}\right) . \label{eq:equiv_Perr_max}
\end{align}
Since all the arms are sampled infinitely often, there exists $t_0$ such that for all $T \ge t_0$, for all $d \neq d^{(1)}$, 
\begin{align*}
    \frac{\alpha_0^d -1 + r_{T-1}^d}{\alpha_0^d + \beta_0^d -1 + m_{T-1}^d} 
    <
    \frac{\alpha_0^{d^{(1)}} -1 + r_{T-1}^{d^{(1)}}}{\alpha_0^{d^{(1)}} + \beta_0^{d^{(1)}} -1 + m_{T-1}^{d^{(1)}}}.
\end{align*}
For each $d \neq d^{(1)}$, define the interval, 
\begin{align*}
    I_d = \left[ \frac{\alpha_0^d -1 + r_{T-1}^d}{\alpha_0^d + \beta_0^d -1 + m_{T-1}^d}, 
    \frac{\alpha_0^{d^{(1)}} -1 + r_{T-1}^{d^{(1)}}}{\alpha_0^{d^{(1)}} + \beta_0^{d^{(1)}} -1 + m_{T-1}^{d^{(1)}}}\right].
\end{align*}
Using Proposition~\ref{lem:lem_26_shang} with $a_0 = \alpha_0^d + r_{T-1}^d$, $a_1 = \beta_0^d + m_{T -1}^d - r_{T-1}^d$, $a_2= \alpha_0^{d^{(1)}} + r_{T-1}^{d^{(1)}}$, and $a_3 = \beta_0^{d^{(1)}} + m_{T -1}^{d^{(1)}} - r_{T-1}^{d^{(1)}}$, we get 
\begin{align*}
     \mathbb{P}\left(\tilde{\theta}^d \ge \tilde{\theta}^{d^{(1)}} | \bm m_{T-1}, \bm r_{T-1}\right) & \le D \exp\left( - \inf_{y \in I_d} C_{a_0,a_1}(y) + C_{a_2,a_3}(y)\right).
\end{align*}
We have 
\begin{align*}
    D & \le 3 + (a_0 + a_1 -1 ) \dKL \left(\frac{a_0-1}{a_0+a_1-1}, \frac{a_2-1}{a_2+a_3-1}\right)
    \\
    & = 3 + \left(\alpha_0^d + \beta_0^d -1 + m_{T-1}^d\right) \dKL\left(\frac{\alpha_0^d -1 + r_{T-1}^d}{\alpha_0^d + \beta_0^d -1 + m_{T-1}^d}, \frac{\alpha_0^{d^{(1)}} -1 + r_{T-1}^{d^{(1)}}}{\alpha_0^{d^{(1)}} + \beta_0^{d^{(1)}} -1 + m_{T-1}^{d^{(1)}}}\right)
    \\
    & \le 3 + 2T N \dKL\left(0, \frac{\alpha_0^{d^{(1)}}   -1 + TN}{\alpha_0^{d^{(1)}} + \beta_0^{d^{(1)}} -1 + TN}\right)
    \\
    & \le C\left(\alpha_0^{d^{(1)}}, \beta_0^{d^{(1)}} \right) TN \log TN,
\end{align*}
with some positive constant $ C\left(\alpha_0^{d^{(1)}}, \beta_0^{d^{(1)}} \right)$. We get 
\begin{align}
    &  \limsup_{T \to \infty}\frac{1}{NT} \log \left( \frac{\mathbb{P}\left(\tilde{\theta}^d \ge \tilde{\theta}^{d^{(1)}} | \bm m_{T-1}, \bm r_{T-1}\right)}{\exp\left( - \inf_{y \in I_d} C_{a_0,a_1}(y) + C_{a_2,a_3}(y)\right)}\right) \nonumber
    \\
    & \le \limsup_{T \to \infty}\frac{1}{NT} \log \left(C\left(\alpha_0^{d^{(1)}}, \beta_0^{d^{(1)}} \right) TN \log TN\right) \nonumber
    \\
    & = 0. \label{eq:equiv_P_exp}
\end{align}

Using Lemma~\ref{lem:lowerbound_XY} with $a_0 = \alpha_0^d + r_{T-1}^d$, $a_1 = \beta_0^d + m_{T -1}^d - r_{T-1}^d$, $a_2= \alpha_0^{d^{(1)}} + r_{T-1}^{d^{(1)}}$, and $a_3 = \beta_0^{d^{(1)}} + m_{T -1}^{d^{(1)}} - r_{T-1}^{d^{(1)}}$, we get,
\begin{align*}
    \mathbb{P}\left(\tilde{\theta}^d \ge \tilde{\theta}^{d^{(1)}} | \bm m_{T-1}, \bm r_{T-1}\right) & \ge D \exp\left( - \inf_{y \in I_d} C_{a_0,a_1}(y) + C_{a_2,a_3}(y)\right),
\end{align*}
where $D = \frac{1}{\left(\alpha_0^d + \beta_0^d + m_{T -1}^d\right)\left(\alpha_0^{d^{(1)}} + \beta_0^{d^{(1)}} + m_{T -1}^{d^{(1)}}\right)}$. We get 
\begin{align}
    & \liminf_{T \to \infty} \frac{1}{NT} \log \left( \frac{\mathbb{P}\left(\tilde{\theta}^d \ge \tilde{\theta}^{d^{(1)}} | \bm m_{T-1}, \bm r_{T-1}\right)}{\exp\left( - \inf_{y \in I_d} C_{a_0,a_1}(y) + C_{a_2,a_3}(y)\right)}\right) \nonumber
    \\
    & \ge \liminf_{T \to \infty} \frac{1}{NT} \log D \nonumber
    \\
    & \ge \liminf_{T \to \infty} \frac{1}{NT} \log \frac{1}{4(NT)^2} \nonumber
    \\
    & \ge 0 \label{eq:posterior_lowerbound}
\end{align}

Therefore, for each $d \neq d^{(1)}$,  similarly to the proof of Theorem~6 of \cite{Shang2020}, we get
\begin{align}
    & 1 - p_{T}^{d^{(1)}}
    \\\nonumber
    & \underset{ \eqref{eq:equiv_Perr_max}}{\doteq} \max_{d \neq d^{(1)}} \mathbb{P}\left(\tilde{\theta}^d \ge \tilde{\theta}^{d^{(1)}} | \bm m_{T-1}, \bm r_{T-1}\right) 
    \\\nonumber
    & \underset{\eqref{eq:equiv_P_exp} \text{ and } \eqref{eq:posterior_lowerbound}}{\doteq}  \exp\Bigg( -TN \min_{d \neq d^{(1)}} \inf_{y \in I_d} \Bigg[\frac{\alpha_0^d + \beta_0^d -1 + m_{T-1}^d}{TN} 
    \dKL\left( \frac{\alpha_0^d -1 + r_{T-1}^d}{\alpha_0^d + \beta_0^d -1 + m_{T-1}^d},y\right)
    \\
    & \qquad\qquad\qquad
    + \frac{\alpha_0^{d^{(1)}} + \beta_0^{d^{(1)}} -1 + m_{T-1}^{d^{(1)}}}{TN}\dKL\left(\frac{\alpha_0^{d^{(1)}} -1 + r_{T-1}^{d^{(1)}}}{\alpha_0^{d^{(1)}} + \beta_0^{d^{(1)}} -1 + m_{T-1}^{d^{(1)}}},y \right)\Bigg]\Bigg) \label{ineq:shangfix}
    \\\nonumber
    & \ge \exp\Bigg( -TN \max_{\bm w} \min_{d \neq d^{(1)}} \inf_{y \in I_d} \Bigg[ w_d \dKL\left( \frac{\alpha_0^d -1 + r_{T-1}^d}{\alpha_0^d + \beta_0^d -1 + m_{T-1}^d},y\right)\\
    & \qquad\qquad\qquad\qquad\qquad\qquad\qquad\qquad + \frac{1}{2}\dKL\left(\frac{\alpha_0^{d^{(1)}} -1 + r_{T-1}^{d^{(1)}}}{\alpha_0^{d^{(1)}} + \beta_0^{d^{(1)}} -1 + m_{T-1}^{d^{(1)}}},y\right) \Bigg] \Bigg),\nonumber
\end{align}
where for two real-valued sequences $(a_n)$ and $(b_n)$, $a_n \doteq b_n$ denotes logarithmic equivalence, that is,
\begin{align*}
    \lim_{n\to\infty} \frac{1}{n}\log\left(\frac{a_n}{b_n}\right) = 0.
\end{align*}
For each $\varepsilon>0$, there exists $t_1(\varepsilon)>0$ such that for all $ T \ge t_1$, for all $d \neq d^{(1)}$,
\begin{align*}
    I_d \subset [\theta^d+ \varepsilon, \theta^{d^{(1)}}- \varepsilon] = I_{d, \varepsilon}^*.
\end{align*}
As the Kullback-Leibler divergence is uniformly continuous on $I_{d, \varepsilon}^*$, there exists $t_2(\varepsilon)>0$ such that for all $T \ge t_2$, 
\begin{align*}
    \dKL\left(\frac{\alpha_0^d -1 + r_{T-1}^d}{\alpha_0^d + \beta_0^d -1 + m_{T-1}^d}, y\right) \ge (1- \varepsilon) \dKL(\theta^d,y), 
\end{align*}
for all $y$ and all $d \in \{1, \ldots, k\}$. Thus, we get,
\begin{align*}
    1 - p_{T}^{d^{(1)}} &  \ge 
     \exp\left( -T N \max_{\bm w} \min_{d \neq d^{(1)}} \inf_{y \in I_{d, \varepsilon}^*}\left[ w_d \dKL( \theta^d,y) + \frac{1}{2}\dKL(\theta^{d^{(1)}},y) \right] \right)
\end{align*}
and thus,
\begin{align*}
    \limsup_{T\to \infty} -\frac{1}{NT} \log \left(\sum_{d \neq d^{(1)}} p_T^d\right) \le \Lambda^*.
\end{align*}
This concludes the proof of the first part of Lemma~\ref{lem:kasy22}.

Next, we show the second part of Lemma~\ref{lem:kasy22}. 
When  $\lim_{T \to \infty}m_{T}^d/(NT) = \lambda^d$ for all $d \in \{1, \ldots, k\}$,  we have that for each $d \in \{1, \ldots, k\}$, 
\begin{align*}
    & \lim_{T \to \infty} \inf_{y \in I_d}\Bigg[ \frac{\alpha_0^d + \beta_0^d -1 + m_{T-1}^d}{TN} \dKL\left( \frac{\alpha_0^d -1 + r_{T-1}^d}{\alpha_0^d + \beta_0^d -1 + m_{T-1}^d},y\right)\\
    &\ \ \ + \frac{\alpha_0^{d^{(1)}} + \beta_0^{d^{(1)}} -1 + m_{T-1}^{d^{(1)}}}{TN}\dKL\left(\frac{\alpha_0^{d^{(1)}} -1 + r_{T-1}^{d^{(1)}}}{\alpha_0^{d^{(1)}} + \beta_0^{d^{(1)}} -1 + m_{T-1}^{d^{(1)}}},y\right)\Bigg]
    \\
    & = \inf_{y \in \left[\theta^d, \theta^{d^{(1)}}\right]} \left[\lambda^d \dKL(\theta^d,y) + \frac{1}{2} \dKL(\theta^{d^{(1)}},y)\right]
    \\
    & = \Lambda^*.
\end{align*}
Therefore, 
\begin{align*}
    1 - p_{T}^{d^{(1)}}  & \doteq \exp\left( -T N\max_{\bm w} \min_{d \neq d^{(1)}} \inf_{y \in I_{d, \varepsilon}^*}\left[ w_d \dKL( \theta^d,y) + \frac{1}{2}\dKL(\theta^{d^{(1)}},y)\right] \right)
    \\
    & \doteq \exp(- TN \Lambda^*).
\end{align*}
Hence, we get
\begin{align*}
    \lim_{T \to \infty} - \frac{1}{TN} \log \left( \sum_{d \neq d^{(1)}} p^d_T\right) = \Lambda^*.
\end{align*}
This concludes the proof.
\end{proof}

\section{Proof of Lemma~\ref{prp_shang2020}}
\label{appdx:prp_shang2020}
The proof is analogous to the proof of Lemma~28 in \cite{Shang2020}. We extend the result from Beta-Bernoulli bandit model with the $\mathrm{Beta}(1, 1)$ prior to that with the $\Beta(\alpha_0^d,\beta_0^d)$ priors for each $d \in \{1, \ldots, k\}$, where the constants $\alpha_0^1, \ldots, \alpha_0^k, \beta_0^1, \ldots ,\beta_0^k>0$ can be arbitrary.
\begin{proof}
 When $\overline{D}$ is empty, then $\mathrm{SW}_T(d) \xrightarrow{\mathrm{a.s.}} \theta^{d}$. The posterior variance $\sigma_{T, d}^2$ is
\begin{align}\label{eq:posterior_variance}
    \sigma_{T, d}^2 & = \frac{\alpha_T^d\beta_T^d}{\left(\alpha_T^d + \beta_T^d\right)^2\left(\alpha_T^d + \beta_T^d+1\right)} = \frac{\left(\alpha_0^d + r_{T-1}^d\right)\left(\beta_0^d + m_{T-1}^d-  r_{T-1}^d\right)}{\left(\alpha_0^d + \beta_0^d + m_{T-1}^d\right)^2\left(\alpha_0^d + \beta_0^d + m_{T-1}^d + 1\right)}.
\end{align}
Therefore, under the event of $\overline{D} = \emptyset$, $\sigma_{T, d}^2 \xrightarrow{\mathrm{a.s.}} 0$ (posterior concentration).  
When $\overline{D}$ is not empty, then from \eqref{eq:posterior_variance}, we have that $\liminf_{T \to \infty}\sigma_{T, d}^2 >0$. Hence $\liminf_{T \to \infty}p_t^d > 0$. This concludes the proof.  
\end{proof}

\end{appendix}

\bibliographystyle{econometrica}
\bibliography{correction.bbl}

\end{document}